\tikzset{>=stealth}
\date{}
\let\oldmaketitle=\maketitle
\renewcommand{\maketitle}{\oldmaketitle\thispagestyle{empty}}
\newcommand{\sn}{\smallskip\noindent}
\newcommand{\bbbb}{\ensuremath{\mathrm{I\!B}}}
\newcommand{\bfunc}{\ensuremath{\mathcal{B}}}
\newtheorem{definition}{Definition}
\newtheorem{lemma}{Lemma}
\newtheorem{example}{Example}
\newtheorem{theorem}{Theorem}
\newtheorem{corollary}{Corollary}
\newcommand{\eqdef}{\stackrel{\text{\normalfont def}}{=}}
\newcommand{\T}{\ensuremath{\mathrm{T}}}
\def\hang{\hangindent19pt}
\def\d@anger{\medbreak\begingroup\clubpenalty=10000
 \def\par{\endgraf\endgroup\medbreak} \noindent\hang\hangafter=-2
 \hbox to0pt{\hskip-\hangindent\dbend\hfill}\small}
\outer\def\danger{\d@anger}
\title{A framework for reversible circuit complexity}
\author{%
  Mathias Soeken$^{1,2}$ \qquad Nabila Abdessaied$^2$ \qquad
  Rolf Drechsler$^{1,2}$\\
  $^1$ Faculty of Mathematics and Computer Science, University of Bremen,
    Germany \\
  $^2$ Cyber-Physical Systems, DFKI GmbH, Bremen, Germany \\
  \{msoeken,nabila,drechsle\}@informatik.uni-bremen.de}
\begin{document}

\maketitle

\begin{abstract}
  Reversible single-target gates are a generalization of Toffoli gates which are
  a helpful formal representation for the description of synthesis algorithms
  but are too general for an actual implementation based on some technology.
  There is an exponential lower bound on the number of Toffoli gates required to
  implement any reversible function, however, there is also a linear upper bound
  on the number of single-target gates which can be proven using a constructive
  proof based on a former presented synthesis algorithm.  Since single-target
  gates can be mapped to a cascade of Toffoli gates, this synthesis algorithm
  provides an interesting framework for reversible circuit complexity.  The
  paper motivates this framework and illustrates first possible applications
  based on it.
\end{abstract}

\section{Introduction}
In this paper we concern ourselves with a special class of Boolean
multiple-output functions called \emph{reversible functions} which are those
functions~$f:\bbbb^n\to\bbbb^m$ that are bijective, i.e.~there exists a
$1$-to-$1$ mapping from the inputs to the outputs.  Clearly, if~$f$ is
reversible, then~$n=m$.  Boolean multiple-output functions that are not
reversible are called irreversible.  Reversible functions can be implemented in
terms of reversible circuits.  Reversible functions and circuits play an
important role in quantum computation~\cite{BBC+95} and low-power
computing~\cite{BAP+12}.

A lot of research has been investigating the complexity of Boolean functions and
Boolean circuits in the past~\cite{Weg87,Vol99}, however, no thorough
considerations have been made for reversible functions and circuits so
far. Recently the complexity of synthesis~\cite{CCC14} and equivalence
checking~\cite{Jor13} have individually been investigated.  Based on two bounds
for the number of gates in reversible circuits, in this paper we propose a
general framework that is helpful for the analysis of reversible circuit
complexity.  The first bound is a linear upper bound with respect to
single-target gates~\cite{VR08}, the second one is an exponential lower bound
with respect to Toffoli gates~\cite{MDM05}.  Single-target gates are convenient
to be used as a model for analysis of reversible functions as well as for the
description of synthesis algorithms~\cite{VR08,NSTD14}, whereas Toffoli gates
have been used in practical implementations~\cite{DV02}.  Single-target gates
can be mapped to a cascade of Toffoli gates using exclusive sum-of-products
(ESOP) mapping.

The paper is structured as follows.  We first review ESOP mapping and reversible
circuits.  Afterwards, we prove the two bounds that are discussed above.
Section~\ref{sec:framework} describes the framework for complexity analysis and
Sect.~\ref{sec:application} illustrates an application based on the framework
for a better than optimal embedding strategy.  The paper concludes in
Sect.~\ref{sec:conclusion}.

\section{Notation and Definitions}
\subsection{Boolean Functions}
Let~$\bbbb\eqdef\{0,1\}$ denote the \emph{Boolean values}.  Then we refer to
$\bfunc_{n,m}\eqdef\{f\mid f\colon\bbbb^n\to\bbbb^m\}$ as the set of all
\emph{Boolean multiple-output functions} with $n$~inputs and $m$~outputs.  There
are $2^{m2^n}$ such Boolean functions.  We write~$\bfunc_n\eqdef\bfunc_{n,1}$
and assume that each $f\in\bfunc_n$ is represented by a propositional formula
over the variables $x_1,\dots,x_n$.  Furthermore, we assume that each
function~$f\in \bfunc_{n,m}$ is represented as a tuple $f=(f_1,\dots,f_m)$
where~$f_i\in\bfunc_n$ for each~$i\in\{1,\dots,m\}$ and hence~$f(\vec
x)=(f_1(\vec x),\dots,f_m(\vec x))$ for each~$\vec x\in\bbbb^n$.

\subsection{Exclusive Sum of Products}
Exclusive sum-of-products~(ESOPs,~\cite{Sas93}) are two-level descriptions for
Boolean functions in which a function is composed of~$k$ \emph{product terms}
that are combined using the exclusive-OR~(EXOR,~$\oplus$) operation.  A product
term is the conjunction of $l_i$ literals where a \emph{literal} is either a
propositional variable~$x^1=x$ or its negation~$x^0=\bar x$.  ESOPs are the most
general form of two-level AND-EXOR expressions:
\begin{equation}
  \label{eq:esop}
  f=\bigoplus_{i=1}^k x^{p_{i_1}}_{i_1}\land\cdots\land x^{p_{i_{l_i}}}_{i_{l_i}}
\end{equation}
Several restricted subclasses have been considered in the past,
e.g.~\emph{positive polarity Reed-Muller expressions}~(PPRM~\cite{Sas93}), in
which all literals are positive.  There are further subclasses and most of them
can be defined based on applying the following decomposition rules. An arbitrary
Boolean function~$f (x_1,x_2,\dots, x_n )$ can be expanded as
\begin{alignat}{2}
  f&=\bar x_i f_{\bar x_i} \oplus x_if_{x_i} \tag{Shannon} \\
  f&=f_{\bar x_i} \oplus x_i(f_{\bar x_i}\oplus f_{x_i}) \tag{positive Davio} \\
  f&=f_{x_i} \oplus \bar x_i(f_{\bar x_i}\oplus f_{x_i}) \tag{negative Davio}
\end{alignat}
with \emph{co-factors}
$f_{\bar x_i}=f(x_1, \dots, x_{i-1}, 0, x_{i+1}, \dots, x_n )$ and
$f_{x_i}=f(x_1, \dots, x_{i-1}, 1, x_{i+1}, \dots, x_n )$.

\subsection{Reversible Circuits}
Reversible functions can be realized by reversible circuits that consist of at
least~$n$ lines and are constructed as cascades of reversible gates that belong
to a certain gate library. The most common gate library consists of Toffoli
gates or single-target gates.
\begin{definition}[Reversible single-target gate]
  Given a set of variables~$X=\{x_1,\dots,x_n\}$, a
  \emph{reversible single-target gate}~$\T_g(C,t)$ with \emph{control
  lines}~$C=\{x_{i_1},\dots,x_{i_k}\}\subset X$, a \emph{target line}~$t\in
  X\setminus C$, and a \emph{control function}~$g\in\bfunc_k$
  inverts the variable on the target line, if and only
  if~$g(x_{i_1},\dots,x_{i_k})$ evaluates to true.
  All other variables remain unchanged.
  If the definition of~$g$ is obvious from the context, it can be omitted from
  the notation~$\T_g$.
\end{definition}
\begin{definition}[Toffoli gate]
  \emph{Mixed-polarity multiple-control Toffoli~(MPMCT) gates} are a subset of
  the single-target gates in which the control function~$g$ can be represented
  with one product term or~$g =\displaystyle\bigwedge_{k=i}^j x_i^p=1$.
  \emph{Multiple-control Toffoli gates~(MCT)} in turn are a subset from MPMCT
  gates in which the product terms can only consist of positive literals.
\end{definition}
\noindent
Using synthesis algorithms it can easily been shown that any reversible
function~$f\in\bfunc_{n,n}$ can be realized by a reversible circuit with~$n$
lines when using MCT gates.  That is, it is not necessary to add any temporary
lines (ancilla) to realize the circuit. This can be the case if the MCT (or
MPMCT) gates are restricted to a given size, e.g.~three bits.  Note that each
single-target gate can be expressed in terms of a cascade of MPMCT or MCT gates,
which can be obtained from an ESOP or PPRM expression~\cite{Sas93},
respectively.  For drawing circuits, we follow the established conventions of
using the symbol $\oplus$ to denote the target line, solid black circles to
indicate positive controls and white circles to indicate negated controls.
\begin{figure}[t]
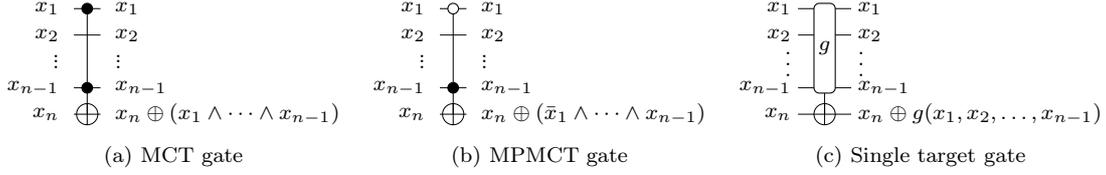

  \centering
    \subfloat[MCT gate\label{fig:mct}]{\tikzmct}\hfill
    \subfloat[MPMCT gate\label{fig:mpmct}]{\tikzmpmct}\hfill
      \subfloat[Single target gate\label{fig:stg}]{\tikzSingleTarget}
    \caption{Reversible gates}
\label{sec:reversible-gates}
\end{figure}
\begin{figure}[t]
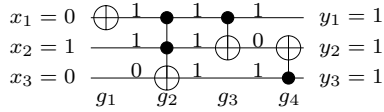

  \centering    
 \tikzrevcircuit
    \caption{Reversible circuit}
\label{fig:revcircuit}
\end{figure}

\begin{example}
  Figure~\ref{fig:mct} shows a Toffoli gate with $n$ positive controls,
  Fig.~\ref{fig:mpmct} shows a Toffoli gate with mixed polarity control lines,
  and Fig.~\ref{fig:stg} shows the diagrammatic representation of a
  single-target gate based on Feynman's notation.  Fig.~\ref{fig:revcircuit}
  shows different Toffoli gates in a cascade forming a reversible circuit.  The
  annotated values demonstrate the computation of the gate for a given input
  assignment.
\end{example}

\begin{figure*}[t]
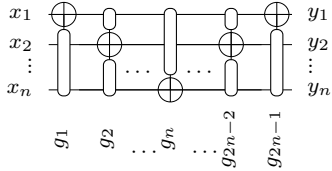

  \centering
  \tikzVshape
  \caption{Synthesis based on Young subgroups}
  \label{fig:young-subgroup-synthesis}
\end{figure*}


\section{Upper bound for Single-Target Gate Circuits}
\begin{theorem}
  \label{thm:upper-bound}
  Let~$f\in\bfunc_{n,n}$ be reversible and $x$ a variable in~$f$.  Then, $f$ can
  be decomposed as~$f=g_2\circ f'\circ g_1$ into three
  functions~$g_1,f',g_2\in\bfunc_{n,n}$ such that~$f'$ is a reversible function
  that does not change~$x$, and~$g_1$ and~$g_2$ can each be realized as a
  single-target gate that acts on~$x$.
\end{theorem}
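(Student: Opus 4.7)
Without loss of generality take $x = x_n$ and write each input as $(\vec y, x_n)$ with $\vec y = (x_1,\dots,x_{n-1})$. A single-target gate acting on $x_n$ has the form $g_k(\vec y, x_n) = (\vec y, x_n \oplus h_k(\vec y))$ for some control function $h_k \in \bfunc_{n-1}$, so the plan is to exhibit $h_1, h_2$ for which $f' \eqdef g_2 \circ f \circ g_1$ leaves $x_n$ unchanged; reversibility of $f'$ is then automatic, as it is a composition of bijections. Writing $f(\vec y, i) = (\pi_i(\vec y), \sigma_i(\vec y))$ for $i \in \{0,1\}$ splits the task into (a) a combinatorial step that uses only the $\pi_i$ to build $h_1$ and (b) a forced choice of $h_2$ from the $\sigma_i$.

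For step~(a) I form a bipartite multigraph $G$ with two copies $A, B$ of $\bbbb^{n-1}$: for every $\vec y \in A$ I add one edge to $\pi_0(\vec y) \in B$ labelled~$0$ and one edge to $\pi_1(\vec y) \in B$ labelled~$1$. Each left vertex has degree~$2$ by construction, and because $f$ is bijective each right vertex also has degree~$2$, so $G$ is $2$-regular. A $2$-regular bipartite multigraph decomposes into disjoint even cycles and hence admits a perfect matching (pick alternating edges along each cycle). Let $h_1(\vec y)$ be the label of the edge at $\vec y$ that belongs to the chosen matching; then $M_0 \colon \vec y \mapsto \pi_{h_1(\vec y)}(\vec y)$ is a bijection of $\bbbb^{n-1}$, and the complementary matching makes $M_1 \colon \vec y \mapsto \pi_{1 \oplus h_1(\vec y)}(\vec y)$ a bijection as well.

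For step~(b), for each $\vec z \in \bbbb^{n-1}$ let $\vec y_c \eqdef M_c^{-1}(\vec z)$ and $s_c \eqdef \sigma_{c \oplus h_1(\vec y_c)}(\vec y_c)$, so that $(f \circ g_1)(\vec y_c, c) = (\vec z, s_c)$ for $c \in \{0,1\}$. Since $f \circ g_1$ is a bijection, the two outputs $(\vec z, s_0)$ and $(\vec z, s_1)$ must disagree on the last coordinate, i.e.\ $\{s_0, s_1\} = \{0, 1\}$. Setting $h_2(\vec z) \eqdef s_0$ makes $g_2$ flip the last coordinate to $0$ on the $c=0$ branch and to $1$ on the $c=1$ branch, so $f' = g_2 \circ f \circ g_1$ preserves $x_n$ as required.

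The main obstacle is step~(a): showing that $G$ is $2$-regular on the $B$-side, which is the one place where the global bijectivity of $f$ enters the argument, and that the cycle decomposition does produce a perfect matching. Once these are in hand the rest is bookkeeping, and step~(b) has no remaining freedom because the value of $h_2$ at each $\vec z$ is already determined by the $x_n$-component produced along the $c=0$ branch.
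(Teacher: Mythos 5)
Your proof is correct, and it is genuinely different in character from the one in the paper. The paper disposes of the theorem by identifying reversible functions on $n$ bits with the symmetric group $S_{2^n}$ and quoting the Young-subgroup decomposition $a = h_2 v h_1$ with $h_1,h_2 \in S_{2^{n-1}}^2$ and $v \in S_2^{2^{n-1}}$ from the cited literature; the translation of $h_1,h_2,v$ into $g_1,g_2,f'$ is likewise delegated to a reference. You prove the same decomposition from scratch: your $2$-regular bipartite multigraph on two copies of $\bbbb^{n-1}$, whose edges record the first $n-1$ output coordinates of $f$, is precisely where bijectivity is spent, and the alternating-edge perfect matching along its even cycles is the combinatorial core of the Young-subgroup result (it is the degree-$2$ case of K\H{o}nig's edge-colouring theorem). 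What your route buys is a self-contained, constructive argument that exhibits the control functions $h_1$ and $h_2$ explicitly and makes visible that $h_2$ is forced once $h_1$ is chosen; what the paper's route buys is brevity and the group-theoretic framing it inherits from the cited work. Two small points to tidy up: since you build $f' = g_2 \circ f \circ g_1$ rather than $f = g_2 \circ f' \circ g_1$, you should add the one-line remark that single-target gates are involutions, so $f = g_2^{-1} \circ f' \circ g_1^{-1} = g_2 \circ f' \circ g_1$; and your cycle decomposition should explicitly admit $2$-cycles of parallel edges, which occur when $\pi_0(\vec y) = \pi_1(\vec y)$ and which your matching argument handles without change.
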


\begin{proof}
  Reversible functions of $n$ variables are isomorphic to the symmetric group
  $S_{2^n}$.  Consequently, $f$ corresponds to an element~$a\in S_{2^n}$.  The
  element $a$ can be decomposed as~$a=h_2vh_1$, where both~$h_1$ and~$h_2$ are
  members of the Young subgroup~$S_{2^{n-1}}^2$ and $v$ is a member
  of~$S_2^{2^{n-1}}$~\cite{RVS05}.  From~$h_1$, $h_2$, and~$v$ one can
  derive~$g_1$, $g_2$, and~$f'$~\cite{VR08}.
\end{proof}

\begin{corollary}
  \label{cor:upper-bound}
  Each reversible function~$f\in\bfunc_{n,n}$ can be implemented as a reversible
  circuit with at most~$2n-1$ single-target gates.
\end{corollary}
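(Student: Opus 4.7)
The plan is to apply Theorem~\ref{thm:upper-bound} iteratively, peeling off two single-target gates at each step while fixing one more variable of the remaining function. I would proceed by induction on~$n$. The base case $n = 1$ is immediate since every reversible function in $\bfunc_{1,1}$ is either the identity or negation, both realized by at most one single-target gate, matching $2(1)-1 = 1$.

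For the inductive step, given a reversible $f \in \bfunc_{n,n}$, I would apply Theorem~\ref{thm:upper-bound} with $x = x_1$ to decompose $f = g_2 \circ f_1 \circ g_1$, where $g_1$ and $g_2$ are single-target gates acting on~$x_1$ and $f_1$ fixes~$x_1$. I would then apply the theorem again to~$f_1$, this time choosing the target variable~$x_2$, to obtain $f_1 = g_4 \circ f_2 \circ g_3$, and continue iterating the decomposition a total of $n-1$ times, each time targeting a fresh variable.

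The key invariant is that after the $k$-th application the residual $f_k$ fixes all of $x_1, \dots, x_k$. This is what carries the induction through, and I expect it to be the one point requiring real care, since Theorem~\ref{thm:upper-bound} only guarantees fixing of the variable currently chosen, not preservation of previously fixed ones. Preservation nevertheless follows because any single-target gate whose target is~$x_k$ leaves every other line unchanged, so the two gates added in the $k$-th step preserve the fixing property of $f_{k-1}$ on $x_1, \dots, x_{k-1}$; consequently $f_k$ still fixes those variables and now additionally fixes~$x_k$.

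Finally, after $n-1$ iterations the residual $f_{n-1}$ fixes $x_1, \dots, x_{n-1}$ and is reversible on $n$ lines. Hence for each assignment of those $n-1$ variables, the induced action on~$x_n$ is a bijection of~$\{0,1\}$, i.e.~either identity or negation, so $f_{n-1}$ is itself realizable by at most one single-target gate acting on~$x_n$ whose control function on $x_1, \dots, x_{n-1}$ selects the assignments that trigger the flip. Composing the $2(n-1)$ gates produced by the iterative decomposition with this final gate yields the claimed upper bound of $2n - 1$ single-target gates.
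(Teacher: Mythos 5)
Your proof is correct and follows essentially the same route as the paper: iterate Theorem~\ref{thm:upper-bound} over the variables, with the invariant that previously fixed variables remain fixed. The only (equivalent) difference is the endgame --- the paper runs all $n$ iterations and then merges the final two gates, which share the target line, into one single-target gate, whereas you stop after $n-1$ iterations and observe that the residual function, fixing $x_1,\dots,x_{n-1}$, is itself a single-target gate on $x_n$; your explicit verification of the invariant is a detail the paper's proof leaves implicit.
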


\begin{proof}
  When applying Theorem~\ref{thm:upper-bound} to all variables in an iterative
  manner, $f'$ will be the identity function after at most~$n$ steps and at
  most~$2n$ gates have been collected.  Since $f'$ is the identity function in
  the last step, the last two gates can be combined into one single-target gate.
\end{proof}

\sn
A truth table based algorithm that makes use of the results of
Theorem~\ref{thm:upper-bound} and Corollary~\ref{cor:upper-bound} has been
presented in~\cite{VR08}.  Since the variables are selected in a decreasing
order, the target lines of the resulting single-target gates are aligned on a
V-shape (cf.~Fig.~\ref{fig:young-subgroup-synthesis}).
\section{Lower Bound for Toffoli Gate Circuits}
Reversible circuits with $n$ inputs that consist of only one MCT gate can
represent $n\cdot 2^{n-1}$ reversible functions.  There are $n$ possible
positions to fix the target line and then $n-1$ positions remain to either put
or not put a control line.  If one has two Toffoli gates one can represent at
most $(n\cdot 2^{n-1})(n\cdot 2^{n-1})$ functions.  The actual number is
smaller, since for some circuits the order of gates does not matter or both
gates are equal which corresponds to the empty circuit.  In general, one can
represent at most~$(n\cdot2^{n-1})^k$ reversible functions with a circuit that
consists of $k$ Toffoli gates.  Since there are $2^n!$ reversible functions one
can derive that there is at least one function that requires
\begin{equation}
  \label{eq:lower-bound}
  \left\lceil
    \frac{\log(2^n!)}{\log(n2^{n-1})}
  \right\rceil
\end{equation}
gates.

\begin{theorem}
  \label{thm:lower-bound}
  There exist a reversible function which smallest circuit realization requires
  an exponential number of Toffoli gates.
\end{theorem}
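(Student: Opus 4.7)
The plan is to complete the counting argument that the paper already sets up just above the statement. Everything has been reduced to showing that the quantity
\[
  L(n) \;\eqdef\; \left\lceil \frac{\log(2^n!)}{\log(n2^{n-1})} \right\rceil
\]
is exponential in $n$, because the preceding paragraph establishes by pigeonhole that at least one reversible function on $n$ bits requires $L(n)$ MCT gates in any cascade realization.

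First I would recall the upper bound on how many distinct reversible functions a $k$-gate MCT circuit can represent: each gate is determined by choosing a target line (giving $n$ options) and a subset of the remaining $n-1$ lines as positive controls (giving $2^{n-1}$ options), so a cascade of $k$ such gates describes at most $(n \cdot 2^{n-1})^k$ elements of $S_{2^n}$. Since $|S_{2^n}| = 2^n!$, if every reversible function on $n$ bits admitted a circuit with $k$ gates we would need $(n\cdot 2^{n-1})^k \ge 2^n!$. Solving for $k$ yields the bound~\eqref{eq:lower-bound}, so there exists at least one $f \in \bfunc_{n,n}$ requiring at least $L(n)$ gates.

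Next I would estimate $L(n)$ asymptotically using Stirling's approximation. The numerator satisfies
\[
  \log(2^n!) \;=\; n\,2^n - \Theta(2^n),
\]
while the denominator is
\[
  \log(n\,2^{n-1}) \;=\; (n-1) + \log n \;=\; \Theta(n).
\]
Dividing, one obtains $L(n) = \Theta(2^n)$, which is exponential in the number of inputs. Taking any $n$ large enough that $L(n) \ge 2^{n-1}$, say, gives a concrete witness: at least one reversible function on those $n$ bits cannot be realized with fewer than $2^{n-1}$ Toffoli gates. Since the argument works for infinitely many $n$, the theorem follows.

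I do not expect a serious obstacle here: the entire difficulty has been pushed into the clean counting bound $(n\,2^{n-1})^k$ on MCT-circuit expressivity, and the rest is an elementary Stirling estimate. The only care required is to state the lower bound existentially (the paper asserts the existence of some hard function, not that every function is hard), and to be explicit that the argument covers MCT gates; for the broader MPMCT class the same reasoning with $(n \cdot 3^{n-1})^k$ in place of $(n \cdot 2^{n-1})^k$ still produces an exponential lower bound, since the denominator only changes by a constant factor while the numerator $\log(2^n!) = \Theta(n\,2^n)$ is unchanged.
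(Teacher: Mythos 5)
Your proposal is correct and follows essentially the same route as the paper: the pigeonhole argument via the counting bound $(n\cdot 2^{n-1})^k$ is exactly the setup the paper gives in the paragraph preceding the theorem, and all that remains in both cases is to show that $\log(2^n!)/\log(n2^{n-1})=\Omega(2^n)$. The only divergence is in that last estimate, where you invoke Stirling's approximation to get $\log_2(2^n!)=n2^n-\Theta(2^n)$, while the paper avoids Stirling and instead proves the self-contained bound $\log_2(2^n!)\ge\tfrac{1}{2}\,2^n n$ by induction on $n$; both are valid, yours being shorter at the cost of citing an external fact, and your remark on extending the count to $(n\cdot 3^{n-1})^k$ for MPMCT gates matches the paper's closing observation.
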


\begin{proof}
  We show that there exist a constant $c$ such that $\left\lceil
    \frac{\log(2^n!)}{\log(n2^{n-1})} \right\rceil \ge c\cdot2^n$.  We have
\[
  \left\lceil \frac{\log(2^n!)}{\log(n2^{n-1})} \right\rceil
  \ge
  \frac{\log(2^n!)}{\log(n2^{n-1})}
  =
  \frac{\log_2(2^n!)}{\log_2(n2^{n-1})}
  \ge c\cdot2^n
\]
which can be rewritten to
\[
  \log_2(2^n!)\ge c\cdot2^n\log_2(n2^{n-1})=c\cdot2^n\left(\log_2n+(n-1)\right).
\]
Since $\left(\log_2n+(n-1)\right)<2n$ we are left to prove that~$\log_2(2^n!)\ge
c_0\cdot 2^nn$ for some constant~$c_0$, which we do using induction on~$n$.
From the base case we obtain~$c_0=\frac{1}{2}$.  Assume for some $n$ we have
$\log_2(2^n!)\ge\frac{1}{2}\cdot2^nn$, then in the induction step we get
\begin{equation}
  \label{eq:first-step}
  \log_2(2^{n+1}!)=\sum_{k=1}^{2^{n+1}}\log_2k
  =\sum_{k=1}^{2^n}\log_2k + \sum_{k=1}^{2^n}\log_2(k+2^n)
  =\log_2(2^n!)+ \sum_{k=1}^{2^n}\log_2(k+2^n).
\end{equation}
We will now derive a lower bound for the second term in the last expression
of~\eqref{eq:first-step}. We have
\[
  \sum_{k=1}^{2^n}\left(\log_2(k+2^n)-\log_2k\right)
  =
  \sum_{k=1}^{2^n}\log_2\frac{k+2^n}{k}
  =
  \sum_{k=1}^{2^n}\log_2\left(1+\frac{2^n}{k}\right)
  \ge
  \sum_{k=1}^{2^n}1=2^n
\]
from which we derive
\[
  \sum_{k=1}^{2^n}\log_2(k+2^n)
  \ge
  \sum_{k=1}^{2^n}\log_2k+2^n
  =
  \log_2(2^n!)+2^n.
\]
Plugging this into~\eqref{eq:first-step} we get
\[
  \log_2(2^{n+1}!)
  \ge
  \log_2(2^n!)+\log_2(2^n!)+2^n
  \ge
  2\cdot\left(\frac{1}{2}\cdot 2^nn\right)+2^n
  =\frac{1}{2}\cdot2^{n+1}(n+1).\qedhere
\]
\end{proof}

\sn The proof for Theorem~\ref{thm:lower-bound} in~\cite{MDM05} uses mEXOR gates
as underlying gate library, which generalize MCT gates to have more than one
target line.  The proof can be carried out analogously for MPMCT gates.

\section{Framework for Circuit Complexity Analysis}
\label{sec:framework}
\begin{figure}[t]
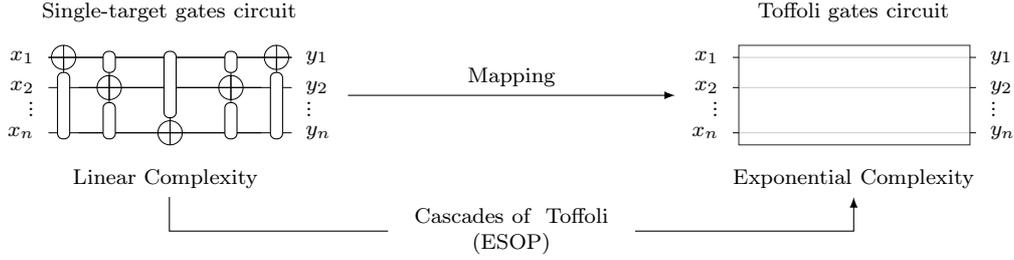

  \centering
   \tikzdiagram
  \caption{Reversible circuit complexity}
  \label{fig:revcomp}
\end{figure}
Both bounds that have been presented in the previous section are used to
motivate a framework for reversible circuit complexity analysis that is
illustrated in Fig.~\ref{fig:revcomp}.  If only considering single-target gates,
one knows that at most linear many gates are required.  When being restricted to
Toffoli gates, there are functions that need at least an exponential number of
gates.  Since single-target gates can be translated to cascades of Toffoli
gates, there must be control functions which require exponential many product
terms when being decomposed into ESOP expressions.

However, when restricting the ESOP mapping interesting cases arise.  Let us
first consider that we only allow single-target target gates which control
function can be mapped to ESOP expressions that have a constant number of
product terms, e.g.~$1$ or~$2$.  Then the resulting Toffoli circuits are of
linear size.  The most interesting and important question is how to determine
which class of reversible functions can be represented using these circuits when
applying this restriction.  We experienced this to be a difficult research
problem in our investigations on this topic.  The same idea can be extended to
other cases.  If e.g.~we allow those single-target gates which control functions
can be mapped to linear size ESOP expressions, one obtains Toffoli circuits of
quadratic size.

\section{Application: ``Better than Optimal'' Embedding}
\label{sec:application}
In this section we consider half of the V-shaped circuit that has been
considered in the previous section.  That is, we consider reversible circuits
with $n$ variables and $n$ single-target gates that have their target lines in
subsequent order from the top line to the bottom line.  Also, no two
single-target gates have the target line in common.  It can easily be seen that
at most $\left(2^{2^{n-1}}\right)^n$ different reversible functions can be
realized with such circuits.

We will now show that in fact exactly $\left(2^{2^{n-1}}\right)^n$ different
functions can be realized, which implies that these circuits are a canonical
representation for this subset of reversible functions.

\begin{lemma}
  \label{lem:canonical-k}
  Reversible circuits with $k\ge n$ lines that have $n$ single-target gates with
  pairwise different target lines in increasing order on lines $1$ to $n$ can
  realize exactly $\left(2^{2^{k-1}}\right)^n$ reversible functions.
\end{lemma}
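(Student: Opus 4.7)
The plan is to show that the natural map from $n$-tuples of Boolean control functions $(g_1,\dots,g_n)$, one per single-target gate, to realized reversible functions is a bijection. The upper bound $\left(2^{2^{k-1}}\right)^n$ is immediate: each single-target gate $\T_{g_i}(C_i,i)$ has a control function $g_i\in\bfunc_{k-1}$ on the remaining $k-1$ lines, so there are at most $|\bfunc_{k-1}|^n=\left(2^{2^{k-1}}\right)^n$ distinct circuits, all of which are manifestly reversible. The nontrivial content is injectivity: distinct tuples $(g_1,\dots,g_n)$ must produce distinct reversible functions.

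To set this up, I first observe that since gate $i$ is the unique gate targeting line $i$ and is applied after gates $1,\dots,i-1$, the final output on line $i$ is
\begin{equation*}
  f_i(\vec x)=x_i\oplus g_i\bigl(f_1(\vec x),\dots,f_{i-1}(\vec x),x_{i+1},\dots,x_k\bigr),
\end{equation*}
because lines $j<i$ already carry their final values $f_j(\vec x)$ while lines $j>i$ are still untouched when gate $i$ fires. Suppose now that two circuits with control functions $(g_1,\dots,g_n)$ and $(g'_1,\dots,g'_n)$ realize the same reversible function. I will show by induction on $i$ that $g_i=g'_i$.

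The base case $i=1$ is direct from $f_1=x_1\oplus g_1(x_2,\dots,x_k)$. For the induction step, assume $g_j=g'_j$ and hence $f_j=f'_j$ for $j<i$. The equation above, together with $f_i=f'_i$, says that $g_i$ and $g'_i$ agree on every argument of the form $(f_1(\vec x),\dots,f_{i-1}(\vec x),x_{i+1},\dots,x_k)$ with $x_i=0$. The key step, which I expect to be the main technical point, is to verify that as $\vec x$ ranges over $\{\vec x\in\bbbb^k:x_i=0\}$, this tuple sweeps out all of $\bbbb^{k-1}$.

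Surjectivity follows from a dimension-counting argument. The map $\Phi\colon\vec x\mapsto(f_1(\vec x),\dots,f_{i-1}(\vec x),x_i,x_{i+1},\dots,x_k)$ is the action of the first $i-1$ single-target gates on $\bbbb^k$, hence a bijection, and it leaves the $i$-th coordinate equal to $x_i$. Restricting $\Phi$ to the $2^{k-1}$ inputs with $x_i=0$ therefore gives a bijection onto those $2^{k-1}$ outputs whose $i$-th coordinate is $0$; dropping that coordinate yields a bijection onto $\bbbb^{k-1}$. Consequently $g_i(\vec y)=g'_i(\vec y)$ for every $\vec y\in\bbbb^{k-1}$, completing the induction and establishing that the map from gate-tuples to reversible functions is injective, so exactly $\left(2^{2^{k-1}}\right)^n$ functions are realized.
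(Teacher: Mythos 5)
Your proof is correct, and it follows the same basic strategy as the paper's --- count by showing the map from control-function tuples $(g_1,\dots,g_n)$ to realized reversible functions is a bijection, via induction along the gates --- but you execute the crucial step that the paper's proof leaves implicit. The paper inducts on the number of gates in the forward direction and simply asserts that appending the $(n+1)$th gate on a fresh target line multiplies the count by $2^{2^{k-1}}$ ``since there are no two gates that realize the same function''; it never explains why two distinct control functions for the last gate, composed with the same prefix, cannot yield the same overall function. Your decoding argument supplies exactly this: the identity $f_i(\vec x)=x_i\oplus g_i\bigl(f_1(\vec x),\dots,f_{i-1}(\vec x),x_{i+1},\dots,x_k\bigr)$ is right (lines below $i$ already carry final values because no later gate targets them; lines above $i$ are still untouched), and the observation that the prefix map $\Phi$ is a bijection of $\bbbb^k$ fixing the $i$-th coordinate, so that the argument tuple sweeps out all of $\bbbb^{k-1}$, is the one nontrivial fact needed to conclude $g_i=g'_i$. (As a minor remark, you do not even need to restrict to $x_i=0$: surjectivity of the argument tuple over all $\vec x$ already follows from composing the bijection $\Phi$ with the coordinate projection, though your restriction gives the slightly stronger and cleaner statement that the sweep is bijective.) In short: same skeleton, but your version is the complete proof.
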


\begin{proof}
  We prove this using induction on~$n$.  The base case is simple since each
  single-target gate realizes a different function and since the target line is
  fixed on the first line, there are $2^{2^{k-1}}$ possibilities to choose the
  control function.  Assuming the claim holds for all circuits with up to $n$
  gates, we consider a circuit that has $n+1$ gates.  The subcircuit $C'$
  consisting of the first $n$ gates realizes $\left(2^{2^{k-1}}\right)^n$
  functions due to the induction hypothesis.  Since the $(n+1)$\textit{th} gate
  has its target line on a line that has not been used as target line in $C'$
  and since there are no two gates that realize the same function, the statement
  follows.
\end{proof}

\begin{corollary}
  Reversible circuits with $n$ lines that have $n$ single-target gates with
  their targets being on increasing lines (from the top to the bottom) can
  realize exactly $\left(2^{2^{n-1}}\right)^n$ reversible functions.
\end{corollary}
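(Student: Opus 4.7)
The plan is to observe that this corollary is simply the specialization of Lemma~\ref{lem:canonical-k} to the case $k=n$, so almost all of the work has already been done. First I would verify that the hypotheses of the lemma apply: with $k=n$ lines and $n$ single-target gates whose target lines are pairwise distinct, the targets are forced to occupy every line exactly once, and the ``increasing order from top to bottom'' condition then forces the target of the $i$\textit{th} gate to be line~$i$. This is precisely the configuration described in Lemma~\ref{lem:canonical-k} for $k=n$.

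Next I would apply the lemma directly: substituting $k=n$ into the count $\left(2^{2^{k-1}}\right)^n$ yields $\left(2^{2^{n-1}}\right)^n$, which matches the desired number exactly. Combined with the elementary upper bound argument given in the paragraph preceding the lemma (each of the $n$ control functions depends on at most $n-1$ remaining variables, hence has at most $2^{2^{n-1}}$ choices), this shows that the upper bound is tight and that the set of such circuits forms a canonical representation for the corresponding subclass of reversible functions.

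There is no real obstacle here: the nontrivial content, namely the injectivity of the map from circuit descriptions to realized functions, has been established in the induction of Lemma~\ref{lem:canonical-k}. The only small thing to double-check is that the counting argument in the lemma's induction step, which relied on the target line of the $(n+1)$\textit{th} gate not appearing among the earlier targets, remains valid when there are no spare ancilla lines (i.e.\ when $k=n$); this holds because the lemma was stated for all $k\ge n$, and the increasing-target constraint ensures exactly this non-collision property.
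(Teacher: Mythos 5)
Your proposal is correct and matches the paper's (implicit) intent exactly: the corollary is just Lemma~\ref{lem:canonical-k} specialized to $k=n$, where the increasing-target condition forces the $i$\textit{th} gate's target onto line $i$ and substitution gives $\left(2^{2^{n-1}}\right)^n$. The paper offers no separate proof for the corollary, so your direct specialization is precisely the argument intended.
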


\noindent
There are also $\left(2^{2^{n-1}}\right)^n$ Boolean multiple-output functions
in~$\bfunc_{n-1,n}$.  When realizing these functions as reversible circuits one
needs to embed them first into reversible functions, which will have up to
$2n-1$ variables.  The additional variables are required to ensure that the
function is bijective.  However, since there is a 1-to-1 correspondence between
the number of half V-shaped circuits on $n$ lines and the number of functions in
$\bfunc_{n-1,n}$, a ``better than optimal'' embedding is possible because in the
conventional case there are functions that require at least $2n-1$ lines
(e.g.~the constant functions).  One only needs to define a mapping function from
the multiple-output function to the reversible one as well as an interpretation
function for the computed outputs.  Based on Lemma~\ref{lem:canonical-k} this
embedding technique can be extended to functions in~$\bfunc_{k-1,n}$ with $k\ge
n$.

\section{Conclusions}
\label{sec:conclusion}
We have motivated a framework for the analysis of complexity of reversible
circuits based on two bounds.  As one first application of this framework we
have presented an idea for a better than optimal embedding.  The research in
this area is still in its infancy so far, however, the discussions started in
this paper provide a starting point to tackle the open problems.  One direction
for future work is to find a way to derive function classes from a subset of
reversible circuits.  It is also open whether these will be function classes
known from the literature or whether new ones for the special case of reversible
functions need to be defined.

\paragraph{Acknowledgments.} We thank Eugenia Rosu for her help with the proof
to Theorem~\ref{thm:lower-bound}.  We also thank Amatulwaseh Hayat for many
interesting discussions.

\end{document}